\newtheorem{theorem}{Theorem}[section]
\newtheorem{lemma}[theorem]{Lemma}
\newtheorem{corollary}[theorem]{Corollary}
\newtheorem{proposition}[theorem]{Proposition}
\theoremstyle{remark}
\newtheorem{remark}{Remark}
\newcommand{\T}{{\mathcal T}}
\newcommand{\C}{{\mathcal C}}
\newcommand{\D}{{\mathcal D}}
\newcommand{\NN}{{\mathbb N}}
\begin{document}

\bibliographystyle{plainnat}

\begin{titlepage}
  \title{\bfseries Revisiting an equivalence between maximum parsimony
    and maximum likelihood methods in phylogenetics}
  \pagestyle{headings} \markboth{Equivalence of MP and ML}{Equivalence
    of MP and ML} \date{}

  \author{\normalsize Mareike Fischer} \affil{Corresponding author}
  \affil{\footnotesize Allan Wilson Centre for Molecular Ecology and
    Evolution, Biomathematics Research Centre, University of Canterbury,
    New Zealand; E-mail: email@mareikefischer.de}

  \setcounter{Maxaffil}{1}

  \author{\normalsize Bhalchandra Thatte} \affil{\footnotesize
    Department of Statistics, University of Oxford, United Kingdom;
    E-mail: thatte@stats.oxford.ac.uk}

\end{titlepage}
\maketitle

\begin{abstract}
  Tuffley and Steel (1997) proved that Maximum Likelihood and Maximum
  Parsimony methods in phylogenetics are equivalent for sequences of
  characters under a simple symmetric model of substitution with no
  common mechanism. This result has been widely cited ever since. We
  show that small changes to the model assumptions suffice to make the
  two methods inequivalent. In particular, we analyze the case of
  bounded substitution probabilities as well as the molecular clock
  assumption. We show that in these cases, even under no common
  mechanism, Maximum Parsimony and Maximum Likelihood might make
  conflicting choices. We also show that if there is an upper bound on
  the substitution probabilities which is `sufficiently small', every
  Maximum Likelihood tree is also a Maximum Parsimony tree (but not vice
  versa). \par
  \vspace{0.3cm} {\noindent \bfseries Keywords:} phylogenetics, maximum
  parsimony, maximum likelihood, molecular clock
\end{abstract}

\section{Introduction}
Stochastic models for nucleotide substitution and tree reconstruction
methods for inferring phylogenetic trees are used to interpret
the ever-growing amount of available genetic sequence
data. Unsurprisingly, such models and methods have therefore been widely
discussed in the last decades (e.g., \citep{felsenstein_1978};
\citep{felsenstein_2004}; \citep{semple_steel_2003};
\citep{yang_2006}). Two of the most frequently used tree reconstruction
methods are {\em Maximum Parsimony} (MP) and {\em Maximum Likelihood}
(ML). A basic difference between these two methods is that MP, unlike
ML, is not based on a specific nucleotide substitution model. If the
sequences under consideration are related by a specific model of
substitution, the results of MP and ML may coincide
\citep{hendy_penny_1989}, but there are also examples, such as the
famous `Felsenstein Zone', for which this is not the case
\citep{felsenstein_1978}.

In 1997, Tuffley and Steel took an important step forward in the
analysis of MP and ML \citep{tuffley_steel_1997}: they showed that a
particular symmetric model of substitution with `no common mechanism' is
sufficient for MP and ML to be equivalent when applied to a sequence of
characters.

The purpose of this paper is to analyze this equivalence of MP and ML
further by considering slightly modified model assumptions that are of
biological relevance. For instance, MP is often assumed to be justified
whenever the nucleotide substitution proba\-bilities are small (e.g.,
\citep{felsenstein_2004}, p. 101). Therefore, we restrict the model by
placing an upper bound on these probabilities, and find that under no
common mechanism MP and ML are no longer equivalent. Moreover, the
equivalence of MP and ML under a `no common mechanism model' also fails
under the constraint of a mole\-cular clock, even without a bound on the
substitution probabilities. These two claims will be established by
constructing counterexamples that are minimal with respect to the number
of taxa. To construct our examples, we exploit a useful property of the
likelihood function for a `no common mechanism' model, namely that it is
multilinear in the substitution probabilities. This fact underlies
Equation 18 and Lemma 2 in \citep{tuffley_steel_1997}, which we use in
our arguments.

We then go on to prove bounds on the probability of observing a given
sequence of characters on a tree, and use them to show that it is
possible to choose sufficiently small substitution probabilities
(depending on the number of taxa, the number of characters and the
number of states) so that every tree chosen by ML is also a most
parsimonious tree.

\section{Notation and Model Assumptions}
\label{notation}

Recall that a {\it phylogenetic $X$-tree} is a tree $\T =(V(\T),E(\T))$
on a leaf set $X=\{1,\ldots,m\} \subset V(\T)$ with no vertices of
degree $2$. Note that the tree does not have to be binary. Furthermore,
recall that a {\it character} $f$ is a function $f: X\rightarrow \C$ for
some set $\C:=\{c_1, c_2, c_3, \ldots, c_r \}$ of $r$ {\em character
  states} ($r \in \NN$). An {\it extension} of $f$ to $V(\T)$ is a map
$g: V(\T)\rightarrow \C$ such that $g(i) = f(i)$ for all $i$ in $X$. For
such an extension $g$ of $f$, we denote by $l_{\T}(g)$ the number of
edges $e=\{u,v\}$ in $\T$ on which a substitution (mutation) occurs,
i.e. where $g(u) \neq g(v)$. The {\em parsimony score} of $f$ on $\T$,
denoted by $l_{\T}(f)$, is obtained by minimizing $l_{\T}(g)$ over all
possible extensions $g$. The parsimony score of a sequence of characters
$S:= f_1f_2\ldots f_n$ is given by
$l_{\T}(S)=\sum\limits_{i=1}^n\;l_{\T}(f_i)$.

Recall that a character $f$ on a leaf set $X$ is said to be {\em informative}
(with respect to parsimony) if at least two distinct character states
occur more than once on $X$. Otherwise $f$ is called {\em non-informative}. Note
that for a non-informative character $f$, $l_{\T_i}(f)=l_{\T_j}(f)$ for
all trees $\T_i$, $\T_j$ on the same set $X$ of leaves.
  
Next we describe the fully symmetric $r$-state model
\citep{neyman_1971}, also known as the $N_r$-model, which underlies the
Tuffley and Steel equivalence result.

Consider a phylogenetic $X$-tree $\T$ arbitrarily rooted at one of its
vertices. The $N_r$-model assumes that a state is assigned to the root
from the uniform distribution on the set of states. The state then
evolves away from the root as follows. The model assumes equal rates of
substitutions between any two distinct character states. For any edge $e
= \{u,v\} \in E(\T)$, where $u$ is the vertex closer to the root, let
$p_e$ denote the conditional probability $P(v=c_i|u=c_j)$, where $c_i
\neq c_j$. The probability $p_e$ is equal for all pairs of distinct
states $c_i$ and $c_j$. Therefore, the probability that a substitution
($c_j$ to a state different from $c_j$) occurs on the edge $e$ is
$(r-1)p_e$. Let $q_e$ be the conditional probability $P(v=c_i|u=c_i)$,
i.e. the probability that no substitution occurs on edge $e$. In the
$N_r$-model, we have $0\leq p_e\leq \frac{1}{r}$ for all $e \in E(\T)$,
and $(r-1)p_e + q_e = 1$. Moreover, the $N_r$-model assumes that
substitutions on different edges are independent. Note that for $r=4$,
the $N_r$-model coincides with the Jukes-Cantor model
\citep{jukes_cantor_1969}.

Let $\T$ be a phylogenetic $X$-tree and let $f$ be a character on its
leaf set $X$.  Let the substitution probabilities assigned to the edges
of $\T$ under the $N_r$-model be collectively denoted by $\bar{p}:=
(p_e: e\in E(\mathcal{T}))$. Then we denote by $P(f|\T, \bar{p})$ the
probability of observing character $f$ given tree $\T$ and the
parameter values $\bar{p}$. Note that $P(f|\T, \bar{p})$ does not depend
on the root position, since the model is symmetric. The maximum value of
this probability for fixed $f$ and $\T$ as $\bar{p}$ ranges over all
possibilities is denoted by $\max P(f|\T)$, i.e. $\max P(f|\T) :=
\max_{\bar{p}}P(f|\T,\bar{p})$.

Now let $S:=f_1,\ldots,f_n$ be a sequence of characters. In this paper, 
we analyze sequences of characters under the $N_r$-model with {\em no
  common mechanism}. This means that the substitution probabilities on
edges may be different for different characters in $S$ without any
correlation between the characters. We suppose that for each character
$f_i$ in the sequence and for each edge $e$ of the tree, there is a
parameter $p_{e,i}$ that gives the substitution probability for $f_i$ on
edge $e$, and that the parameters $p_{e,i}$ are all independent. For
$i=1,\ldots,n$, let $\bar{p}_i:= (p_{e,i}: e\in E(\mathcal{T}))$ be the
vectors of substitution probabilities. We denote the model parameters
$(\bar{p}_i, i=1,\ldots,n)$ collectively as $\Theta$. Then the
probability of observing the sequence of characters $S$ on tree $\T$ for
the given parameters $\Theta$ is given by:

\begin{equation*}
  P(S|\T, \Theta) = \prod_{i=1}^nP(f_i|\T, \bar{p}_i),
\end{equation*}
which follows from the fact that the characters are independent.

We refer to $P(S|\T, \Theta)$ as the probability of observing sequence $S$
given the phylogenetic tree $\T$ and model parameters $\Theta $. We then
define the likelihood of the tree $\T$ and the model parameters $\Theta $
given the sequence $S$, which we refer to as the {\em likelihood function}, as
$L(\T, \Theta|S) := P(S|\T, \Theta)$. The maximum likelihood method of
phylogenetic tree reconstruction involves optimizing the likelihood
function in two steps as described in \citep{semple_steel_2003}. We
first maximize $P(S|\T,\Theta)$ over the space of model parameters
$\Theta$. We define:
\begin{equation*}
  \max P(S|\T) :=  \max_{\Theta }P(S|\T,\Theta).
\end{equation*} 
We then choose a tree $\T$ that maximizes $\max P(S|\T)$. We call such a
tree a maximum likelihood tree (ML-tree) of $S$. Thus, an ML-tree of a
sequence $S$ is $\mathrm{argmax}_{\T}\left( \max
  P(S|\T)\right)$. Note that under the assumption of no common
mechanism, we have:
\[ \max P(S|\T) =
\prod_{i=1}^n\max_{\bar{p}_i}P(f_i|\T,\bar{p}_i).  \]

\section{Results}
Using the notation introduced in the previous section, we are now in a
position to state the equivalence result of Tuffley and Steel
explicitly.

\begin{theorem} {\bf (Tuffley and Steel 1997).}
  \label{thm-tuffley-steel}
  Let $\T$ be a phylogenetic $X$-tree and let $S:=f_1,\ldots,f_n$
  be a sequence of $r$-state characters on $X$. Then, under the
  $N_r$-model with no common mechanism, we have:
\begin{equation*}
  \label{eq_no_common_mechanism}
 \max P(S|\T) = r^{-l_{\T}(S)-n}.
\end{equation*} 
Thus ML and MP both choose the same tree(s).
\end{theorem}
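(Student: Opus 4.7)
The plan is to reduce the sequence case to a single-character statement and then prove the sharper identity $\max P(f|\T) = r^{-l_\T(f)-1}$ for each character $f$. Since the no-common-mechanism assumption already gives $\max P(S|\T) = \prod_i \max P(f_i|\T)$ and $l_\T(S) = \sum_i l_\T(f_i)$, the theorem follows immediately; and once it is established, $\max P(S|\T)$ depends on $\T$ only through the parsimony score $l_\T(S)$, so the ML-tree(s) coincide with the MP-tree(s).

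For the lower bound $\max P(f|\T) \ge r^{-l_\T(f)-1}$, I would set $k := l_\T(f)$, fix a most-parsimonious extension $g^*$ of $f$ with substitution-edge set $M^* \subseteq E(\T)$ of size $k$, and put $p_e := 1/r$ on $M^*$ and $p_e := 0$ off $M^*$. The choice $p_e = 0$ makes $q_e = 1$, so any extension $g$ contributing to $P(f|\T,\bar p)$ must be constant on each connected component of $\T \setminus M^*$; the choice $p_e = 1/r$ makes $p_e = q_e = 1/r$, so each $M^*$-edge contributes a factor $1/r$ regardless of whether $g$ substitutes there. A short exchange argument shows that every component of $\T \setminus M^*$ contains a leaf: if some component $C$ were leafless with $g^*|_C \equiv a$, then relabelling $C$ to match any outside-neighbour across a boundary $M^*$-edge would strictly decrease the substitution count and contradict the minimality of $g^*$. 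Hence every component label is forced by $f$, $g^*$ is the unique contributing extension, and $P(f|\T,\bar p) = r^{-1}\cdot r^{-k} = r^{-k-1}$.

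For the upper bound I would invoke the multilinearity of $P(f|\T,\cdot)$ in the parameters $(p_e)$, which is immediate because $q_e = 1-(r-1)p_e$ is affine in $p_e$ and each edge variable appears at most to the first power in each extension-monomial of the sum. This is the content of Lemma~2 of \citep{tuffley_steel_1997} and forces the maximum of $P(f|\T,\bar p)$ over $[0,1/r]^{|E(\T)|}$ to be attained at a vertex of the box, i.e.\ at some $\bar p$ with each $p_e \in \{0,1/r\}$. Writing $M := \{e : p_e = 1/r\}$, a direct expansion of the sum over extensions yields
\[
  P(f|\T,\bar p) \;=\; r^{-|M|-1}\, N(M),
\]
where $N(M)$ is the number of extensions of $f$ whose substitution set is contained in $M$. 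The upper bound therefore reduces to proving the combinatorial inequality $N(M) \le r^{|M|-l_\T(f)}$.

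For this combinatorial bound I would contract the non-$M$-edges of $\T$ to obtain a tree $\T'$ whose vertices are the connected components of $\T \setminus M$ and whose $|M|$ edges are the elements of $M$. A component containing leaves is forced to carry the common $f$-value of its leaves (otherwise $N(M)=0$ and there is nothing to prove), and each of the $c(M)$ leafless components is free, giving $N(M) \le r^{c(M)}$; so it suffices to show $l_\T(f) \le |M| - c(M)$. To see this, root $\T'$ at any labelled vertex (one exists because $X \ne \emptyset$) and label each unlabelled vertex in a top-down pass by its parent's label; this produces an extension of $f$ in which each unlabelled vertex is responsible for a distinct non-substituting $\T'$-edge, leaving at most $|M| - c(M)$ substitutions and giving $l_\T(f) \le |M| - c(M)$. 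The main obstacle I expect is this last step: its proof is short but requires the exchange-and-rooting arguments above to be set up carefully, and one must also check the (harmless) case where $f$ is inconsistent on some leaf-containing component, making $N(M)=0$ outright.
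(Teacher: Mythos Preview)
The paper does not supply its own proof of this theorem; it is quoted from \citep{tuffley_steel_1997} and used as background. What the paper does provide is a sketch of the Tuffley--Steel argument around Lemma~\ref{corner}: the likelihood $P(f|\T,\bar p)$ is multilinear in the $p_e$, so its maximum over $[0,1/r]^{|E(\T)|}$ is attained at a corner, and the optimal corner is obtained from a most-parsimonious extension by setting $p_e=1/r$ on its substitution edges and $p_e=0$ elsewhere.

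Your proposal is precisely this argument, carried out in full, and it is correct. The reduction to a single character via no common mechanism is immediate; your lower-bound computation $P(f|\T,\bar p)=r^{-l_\T(f)-1}$ at the chosen corner is right, and your exchange argument that every component of $\T\setminus M^*$ contains a leaf is the standard way to see that only $g^*$ contributes. For the upper bound, your identity $P(f|\T,\bar p)=r^{-|M|-1}N(M)$ at a corner with $M=\{e:p_e=1/r\}$ is correct (each $M$-edge contributes a factor $1/r$ regardless of whether the extension substitutes there, since $q_e=p_e=1/r$), and the combinatorial bound $N(M)\le r^{|M|-l_\T(f)}$ via the contracted tree $\T'$ and the top-down labelling is exactly the inequality that completes the Tuffley--Steel proof. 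The edge cases you flag (an $f$-inconsistent leaf-containing component forcing $N(M)=0$; existence of a labelled root since $X\ne\emptyset$) are handled correctly.
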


In the following, we show that small changes to the assumptions of the
$N_r$-model may be enough to make this equivalence fail. In particular, we
analyze two settings of biological interest: first, we consider bounded
substitution probabilities; secondly we investigate the case of a
molecular clock. In both cases, we explicitly construct examples in which
MP and ML choose different sets of trees under no common mechanism.

\subsection{Bounded substitution probabilities}  
In this section, we consider a modification of the $N_r$-model in which
the substitution probabilities on all edges are bounded above by some $u
< \frac{1}{r}$. We construct character sequences for which MP and ML
choose different sets of trees.

{\begin{proposition}
    \label{no-clock} Under the $N_r$-model with no common
    mechanism, for $r \geq 2$, there exist values of $u$ such that if
    the substitution probabilities are bounded above by $u$, MP and ML
    choose different sets of trees.  In particular, we have:
  \begin{enumerate}
  \item For $r=2$, for all values of $u \in \left(0,1 -
      \frac{1}{\sqrt{2}}\right)$, there exist sequences of characters
    for which MP and ML choose different sets of trees.
  \item For $r > 2$, for all values of $u \in (0,\frac{1}{r})$ there
    exist sequences of characters for which MP and ML choose unique and
    distinct trees.
  \end{enumerate}
\end{proposition}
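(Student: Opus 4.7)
The plan is to produce, for each admissible value of $u$, an explicit sequence of characters on a minimal number of taxa (four is the smallest where there exist topologically distinct binary phylogenetic $X$-trees) so that the parsimony-optimal tree and the likelihood-optimal tree disagree. The key tool is the multilinearity of $P(f|\T,\bar p)$ in the edge parameters $p_e$, which the authors already flagged as underlying Equation~18 and Lemma~2 of Tuffley--Steel. Multilinearity implies that, on any box $\prod_e[0,u]$, the maximum of $P(f|\T,\bar p)$ is attained at a vertex of the box, so one only needs to search over assignments $p_e\in\{0,u\}$. This reduces the constrained optimisation to a finite combinatorial problem that can be written as an explicit polynomial in $u$.

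For $r=2$, I would work on four taxa $\{1,2,3,4\}$ with the three unrooted binary quartet topologies $12|34$, $13|24$, $14|23$, and seek a short sequence whose parsimony score strictly favours one quartet but whose constrained likelihood strictly favours another. Concretely, for each candidate character and each quartet tree I would, using the vertex principle, write $\max_{\bar p\in [0,u]^{|E|}} P(f|\T,\bar p)$ as an explicit polynomial in $u$, multiply over the characters of $S$ to obtain $\max P(S|\T)$, and compare the resulting polynomials pairwise. The threshold $u<1-\tfrac{1}{\sqrt 2}$ should emerge from solving a quadratic inequality between the winning polynomials of the MP-optimal and the constrained-ML-optimal quartets. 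I would then identify the shortest sequence $S$ for which this inequality actually flips the ranking relative to parsimony.

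For $r>2$, the same combinatorial skeleton applies, but the larger alphabet gives more flexibility; one should be able to design a sequence whose characters use three or more states so that MP has a unique most-parsimonious tree $\T^\star$, while the constrained likelihood on some competitor $\T^{\star\star}$ strictly exceeds $\max P(S|\T^\star)$ for every $u\in(0,\tfrac{1}{r})$. Because each character contributes a polynomial in $u$ whose coefficients depend on the state-distribution at the leaves, the extra states let one arrange for the leading-order (in $u$) behaviour of one tree to strictly dominate that of the other throughout the whole admissible range, rather than only below a threshold.

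The main obstacle is the combinatorial search for sequences $S$ that simultaneously realise a definite MP winner and the desired strict likelihood inequality over the required range of $u$; once candidates are exhibited, verification is a routine (if tedious) polynomial comparison, made finite by the vertex principle. Minimality with respect to the number of taxa is then established by noting that on three or fewer leaves every unrooted phylogenetic tree is essentially unique, so MP and ML trivially agree there and no counterexample on fewer than four taxa exists.
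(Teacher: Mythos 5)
Your overall strategy (reduce the constrained optimisation to the vertices of the box $\prod_e[0,u]$ via multilinearity, then compare explicit polynomials in $u$ tree by tree) is exactly the mechanism the paper uses, and your sketch for $r>2$ is essentially the paper's argument: on four taxa one takes a three-state sequence $S=f_1 f_2\cdots f_2$ in which the only informative character ($f_1=\alpha\alpha\beta\beta$) fixes a unique MP-tree, while the repeated non-informative character ($f_2=\alpha\beta\gamma\beta$) has strictly larger maximum probability on a competing topology for every $u\in(0,\frac1r)$ (e.g.\ $u^2(1-2u)$ versus $\frac{u^2}{3}$ when $r=3$), so enough copies of $f_2$ make that competitor the unique ML-tree. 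You would still need to exhibit the characters and check the third quartet topology, but the skeleton is right.

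The genuine gap is your $r=2$ plan. You propose to search quartets on four taxa and expect the threshold $u<1-\frac{1}{\sqrt2}$ to ``emerge'' from a quadratic comparison there, and you justify minimality by saying trees on at most three leaves are unique. But for binary characters no four-taxa counterexample exists: with at most one interior edge, every non-informative binary character has the same maximum probability on all quartet trees, and every informative binary character attains its highest maximum probability precisely on the tree where its parsimony score is $1$, so MP and ML cannot disagree on four taxa when $r=2$. The paper's minimal example needs \emph{five} taxa, with $f_1=\alpha\alpha\beta\beta\beta$ and $f_2=\alpha\beta\alpha\beta\beta$; the effect driving the inequivalence is that $f_2$ has \emph{two} most parsimonious extensions on $\T_1$, whose joint contribution $u^2(1-u)^2$ beats the single-extension value $\frac12 u^2$ exactly when $u<1-\frac{1}{\sqrt2}$, which is where the threshold really comes from. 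Your four-taxa search would therefore come up empty, and the claimed minimality statement is wrong for $r=2$. A further mismatch: for $r=2$ you aim at a strict conflict (distinct unique winners), which is stronger than the statement requires; the paper only obtains differing \emph{sets} of trees in the binary case (both trees are MP-trees but ML strictly prefers one), and whether strictly conflicting choices are possible for $r=2$ is left open there.
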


In order to prove this proposition, it is necessary to summarize
the main idea of the original proof of the Tuffley-Steel result. We
state it here in a more general form so that it may be used to analyze
the situation in which the substitution probabilities are bounded.

\begin{lemma}
  \label{corner}
  Let $\T$ be a phylogenetic $X$-tree and let $f$ be a character on
  $X$. Then under the $N_r$-model with all substitution probabilities
  bounded by $u$, where $0 \leq u \leq \frac{1}{r}$, the probability
  $P(f|\T, \bar{p})$ can be maximized at a point where all substitution
  probabilities are either 0 or $u$.
\end{lemma}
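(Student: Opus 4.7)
The plan is to exploit the multilinearity of $P(f\mid\T,\bar{p})$ in the edge-substitution parameters $p_e$, which the authors have already flagged (Equation 18 and Lemma 2 of Tuffley-Steel). Concretely, $P(f\mid\T,\bar{p})$ can be written as
\[
 P(f\mid\T,\bar{p}) = \frac{1}{r}\sum_{g}\prod_{e=\{u,v\}\in E(\T)} w_e(g),
\]
where the sum runs over all extensions $g\colon V(\T)\to \C$ of $f$, and $w_e(g)=q_e=1-(r-1)p_e$ when $g(u)=g(v)$ and $w_e(g)=p_e$ otherwise. For each fixed edge $e$, each $w_e(g)$ is an affine function of $p_e$, and no other edge factor involves $p_e$; hence each summand, and therefore the full sum, is affine in $p_e$ when all other parameters are held fixed.

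With multilinearity in hand, the main argument is a standard corner-of-the-box optimization. Fix any edge $e_0$ and view $P(f\mid\T,\bar{p})$ as a function of $p_{e_0}$ on the interval $[0,u]$, with all other $p_e$ held at their current values. Since this function is affine, its maximum on $[0,u]$ is attained at one of the endpoints $0$ or $u$; replace $p_{e_0}$ by a maximizing endpoint. This does not decrease $P(f\mid\T,\bar{p})$. Repeat the procedure for each edge of $\T$ in turn; because $E(\T)$ is finite and each replacement is monotone in the objective, we terminate at a point $\bar{p}^*$ at which every coordinate lies in $\{0,u\}$ and $P(f\mid\T,\bar{p}^*)\ge P(f\mid\T,\bar{p})$ for the original $\bar{p}$. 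Choosing $\bar{p}$ to be any maximizer (which exists by compactness of $[0,u]^{|E(\T)|}$ and continuity of $P$) gives a maximizer at a corner, as required.

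The only nontrivial step is verifying the affine dependence on each $p_e$; once this is accepted the rest is a routine greedy argument. I would therefore devote the bulk of the write-up to pinning down the decomposition of $P(f\mid\T,\bar{p})$ as a sum over extensions and to checking that $q_e=1-(r-1)p_e$ makes the per-edge factor affine in $p_e$. I would also remark that the lemma's conclusion holds for every $u\in[0,1/r]$, so in particular it recovers the unbounded case $u=1/r$ used implicitly in the original Tuffley-Steel argument.
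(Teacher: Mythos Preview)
Your argument is correct and follows exactly the multilinearity route that the paper invokes: the paper does not supply its own proof but simply observes that Tuffley and Steel's Lemma~2 (whose proof rests on the affine dependence of $P(f\mid\T,\bar{p})$ on each $p_e$) goes through verbatim for any upper bound $u\le \tfrac{1}{r}$, not just $u=\tfrac{1}{r}$. Your write-up therefore spells out precisely the argument the paper cites, including the compactness remark guaranteeing a maximizer exists.
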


Lemma \ref{corner} is the same as Lemma 2 in \citep{tuffley_steel_1997}
except that Tuffley and Steel stated their result only for
$u=\frac{1}{r}$. However, this assumption is not used in their proof and is 
therefore not required for the lemma to hold. Tuffley and Steel used it
to explicitly maximize the probability of observing a character on a
given tree under the $N_r$-model: for a given character $f$ and tree
$\T$ with a most parsimonious extension $g$ of $f$, assigning
substitution probability $\frac{1}{r}$ to edges where a substitution is
induced by $g$, and $0$ elsewhere, gives $\max_{\bar{p}} P(f|\T,
\bar{p})$ (cf. Theorem 3 of \citep{tuffley_steel_1997}).

\par But it turns out that an ML solution cannot be similarly related to
an MP solution when $u < \frac{1}{r}$. That is, if $g$ is a most
parsimonious extension of a character $f$, then we may not be able to
maximize the probability by simply assigning the substitution
probability $u$ to edges on which there is a substitution in $g$, and 0
to edges on which there is no substitution in $g$. The probability may
actually be maximized at some other corner of the feasibility region of 
$\bar{p}$. This is the idea of the following construction.

\begin{proof}[Proof of Proposition ~\ref{no-clock}]
  We provide examples of sequences of characters for which MP and ML may
  choose different sets of trees. We first prove the case $r=2$ with an
  example on five taxa, and show that in this case, there are no such
  examples on fewer than five taxa. Then we explicitly prove the case $r=3$
  with an example on four taxa and show how this example can be generalized
  for $r > 3$.

  \noindent {\bf Case $r=2$:} 

  Let the set of character states be $\{\alpha,\beta \}$. Consider the
  two trees $\T_1$ and $\T_2$ shown in Figure \ref{bounded-fig-old}
  alongside the characters $f_1=\alpha\alpha\beta\beta\beta$ and
  $f_2=\alpha\beta\alpha\beta\beta$. We consider the character sequence
  $S:=f_1f_2$.

  \begin{figure}[ht]
    \centering\vspace{0.5cm}
    \includegraphics[width=15cm]{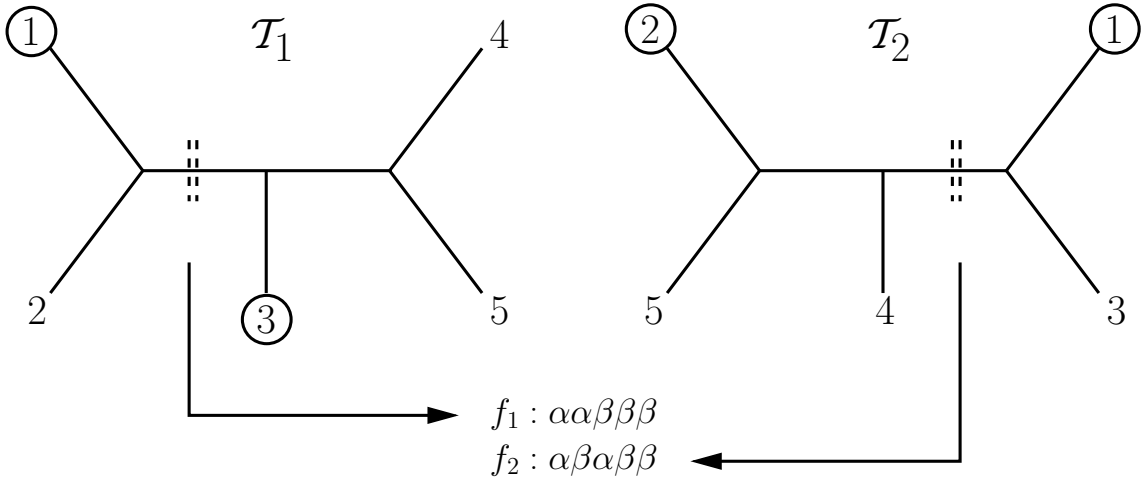} 
    \caption{The characters $f_1$ and $f_2$ both correspond to a split
      on an interior edge of $\T_1$ or $\T_2$, respectively. But, as
      highlighted by the circled leaves, the assignment of $f_1$ on
      $\T_2$ differs from the assignment of $f_2$ to
      $\T_1$.}  
    \label{bounded-fig-old} 
    \vspace{0.5cm} 
  \end{figure} 
  Note that $l_{\T_1}(f_1)=l_{\T_2}(f_2)=1$ and
  $l_{\T_1}(f_2)=l_{\T_2}(f_1)=2$. Therefore,
  $l_{\T_1}(S)=l_{\T_2}(S)=3$, which means that MP will not favor either
  of the two trees $\T_1$, $\T_2$ over the other one. Moreover, as $f_1$
  and $f_2$ are incompatible with one another, it can easily be seen
  that both trees are actually MP-trees: the minimal score of either
  character is 1, as two states are employed, and this score is achieved
  when the character corresponds to a split on an edge of the underlying
  tree -- but because of the incompatibility, the other character will
  have a score of at least 2.  So for $S$, a score of 3 is best
  possible, and thus both $\T_1$ and $\T_2$ are MP-trees.  \par
  For ML, the situation is different. This is because the assignments of
  $f_1$ on $\T_2$ and $f_2$ on $\T_1$ differ, as highlighted by Figure
  \ref{bounded-fig-old}. In fact, character $f_1$ has a unique most
  parsimonious extension on $\T_2$, whereas $f_2$ has two most
  parsimonious extensions on $\T_1$. As we show in the following, for a
  sufficiently small upper bound $u$, the likelihood function is
  maximized when these extensions both contribute to the likelihood.  We
  use a symbolic algebra system to evaluate $P(f_i|\T, \bar{p})$ for
  $i=1,2$, for all trees on five taxa and at all corners of the
  feasibility region of $\bar{p}$ (see Lemma~\ref{corner}). More
  specifically, for the five-leaf-trees under investigation, there are
  seven edges to which either 0 or $u$ can be assigned, which gives
  $2^7=128$ possible parameter vectors $\bar{p}$ at which the likelihood
  might be maximized. We observe that $\max P(f_1|\T_1) =\max
  P(f_2|\T_2)=\frac{1}{2}u$, but $\max P(f_1|\T_2)=\frac{1}{2}u^2$ and
  $\max P(f_2|\T_1)= \max (\frac{1}{2}u^2, u^2(1-u)^2)$. So there are
  choices of $u$, namely all $u< 1-\frac{1}{\sqrt{2}}$, for which $\max
  P(f_1|\T_2) < \max P(f_2|\T_1)$. In these cases, even though both
  $\T_1$ and $\T_2$ are MP-trees, ML will favor tree $\T_1$ over
  $\T_2$. Therefore, MP and ML are not equivalent in this case.

  Now let sequence $\tilde{S}$ contain $n$ copies of character $f_1$ and
  $n+1$ copies of character $f_2$ for some integer $n >0$. Then, clearly
  $l_{\T_1}(\tilde{S})=3n+2$, but $l_{\T_2}(\tilde{S})=3n+1$. Therefore,
  MP will favor tree $\T_2$ over $\T_1$. Moreover, $\T_2$ is an MP-tree
  (by the same incompatibility argument concerning $f_1$ and $f_2$ as
  above). On the other hand, we have $\max P(\tilde{S}|\T_1) =
  (\frac{1}{2}u)^n \cdot \left(u^2(1-u)^2\right)^{n+1}$ and $\max
  P(\tilde{S}|\T_2) = \frac{u^{3n+1}}{2^{2n+1}}$ (provided $u<
  1-\frac{1}{\sqrt{2}}$). We choose a sufficiently large value of $n$ so
  that the former value is larger than the latter. For such choices of
  $n$, ML will favor tree $\T_1$ over $\T_2$, even though MP favors
  $\T_2$. It is important to note, however, that for the sequence
  $\tilde{S}$, the tree $\T_1$ is not an ML-tree. It can be easily
  verified for the tree $\T_3$ in Figure \ref{T_3} that $\max
  P(\tilde{S}|\T_3) = \left(\frac{u}{2}\right)^{n+1}
  \left(u^2(1-u)^2\right)^n$, which is more than $\max
  P(\tilde{S}|\T_1)$. In fact, $\max P(\tilde{S}|\T_3) > \max
  P(\tilde{S}|\T_1)$ for all $u \leq \frac{1}{2}$. In fact, further work
  shows $\T_3$ is the unique ML-tree. Moreover, $\T_3$ is also an
  MP-tree. So for $r=2$, it remains unclear whether MP and ML can make
  strictly conflicting choices.

  \begin{figure}[ht]
    \centering\vspace{0.5cm} 
    \includegraphics[width=6cm]{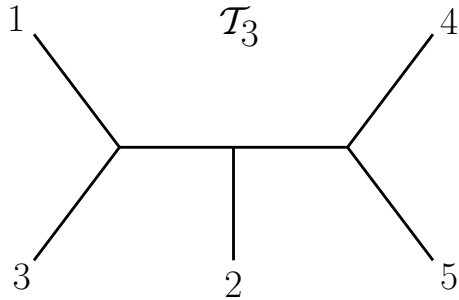} 
    \caption[Tree $\T_3$ is both an MP- and an ML-tree for sequence
    $\tilde{S}$] {Tree $\T_3$ is both an MP- and an ML-tree for sequence
      $\tilde{S}$}
    \label{T_3}
  \end{figure}

  Note that when $r=2$, examples demonstrating the inequivalence of MP
  and ML cannot be constructed with fewer than five taxa. This is because given at most one interior edge, it can be easily
  checked that all non-informative binary characters have the same
  maximum probability on all trees, whereas informative binary
  characters on four taxa have a higher probability on the tree where
  they have parsimony score 1 (calculation not shown).
 
  \noindent {\bf Case $r=3$:}

  Let the set of character states be $\{\alpha,\beta,\gamma \}$. We
  consider four taxa and the characters $f_1 := \alpha\alpha\beta\beta$
  and $f_2:=\alpha\beta\gamma\beta$, as well as the sequence $S$ of
  characters defined by $S:= f_1\underbrace{f_2 \ldots f_2}_{n
    \mbox{\scriptsize \hspace{0.02cm} times}}$. Two of the three
  possible trees on four taxa are shown in Figure \ref{4-taxa}: the tree
  $\T_4 = 12|34$ and the tree $\T_5=13|24$.

  \begin{figure}[ht] 
    \centering \subfloat[]{ \includegraphics[width=4cm]{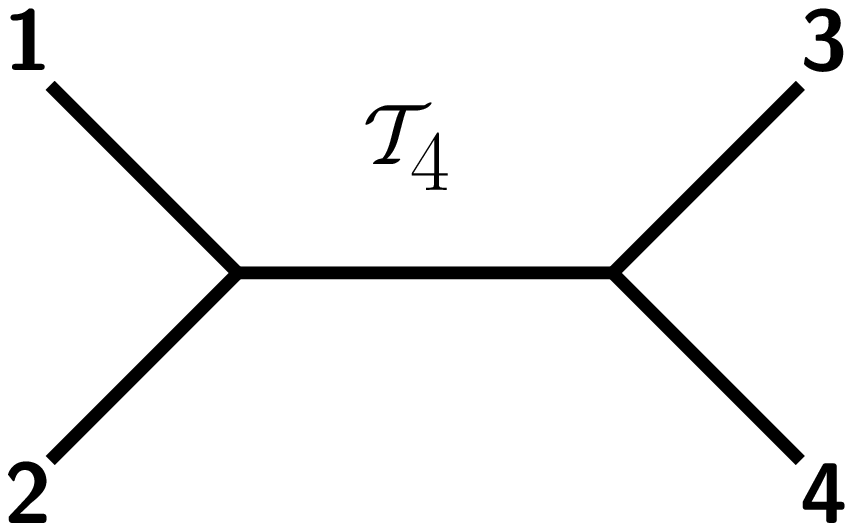}}
    \hspace{2.5cm}\subfloat[] {\includegraphics[width=4cm]{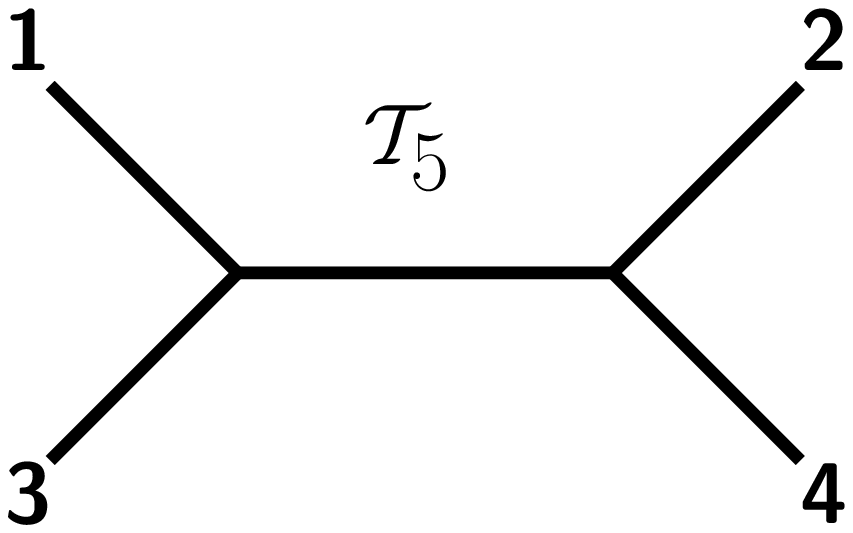} }
    \renewcommand*{\captionsize}{\footnotesize}
    \caption[Illustration of the unique MP tree versus the perfect
    distance-tree ]
    {Tree $\T_4$ illustrated in (a) is the unique MP-tree for $S$,
      whereas (b) depicts tree $\T_5$, which is the unique ML-tree for
      $S$ when $n$ is chosen sufficiently large.}
    \label{4-taxa}
  \end{figure}

  Tree $\T_4$ is clearly the unique MP-tree of $S$, as the only
  informative character in $S$ is $f_1=\alpha\alpha\beta\beta$.

  \par The ML-trees are obtained as described at the end of
  Section~\ref{notation}. As before, we used a symbolic algebra system
  to evaluate $P(f|\T, \bar{p})$ for all characters $f$ in the sequence,
  for all trees on four taxa and at all corners of the feasibility region
  of $\bar{p}$ (see Lemma~\ref{corner}). We observed that $\max
  P(f_2|\T_4)=\frac{u^2}{3}$ and $\max P(f_2|\T_5)
  =u^2(1-2u)$. Therefore, for all $u < \frac{1}{r}$, we have $\max
  P(f_2|\T_5) > \max P(f_2|\T_4)$. Now for any $u < \frac{1}{r}$, a
  sufficiently large value of $n$ may be chosen such that $\frac{\max
    P(S|\T_5)}{\max P(S|\T_4)} > 1$. We do not analyze the character
  $f_1$, although the actual choice of $n$ will depend on the ratio
  $\frac{\max P(f_1|\T_5)}{\max P(f_1|\T_4)}$ and on $u$. Therefore, MP and
  ML choose different trees in this three-state setting. Moreover, it turns
  out that for the third topology on four taxa, namely $\T_6 = 14|23$,
  we have $\max P(S|\T_6) < \max P(S|\T_5)$ for all choices of $u \leq
  \frac{1}{3}$ (calculations not shown). So, $\T_5$ is the unique
  ML-tree, whereas $\T_4$ is the unique MP-tree in this setting. So MP
  and ML make strictly conflicting choices.

  \noindent {\bf Case $r > 3$:}

  Let the set of states be $\C := \{\alpha, \beta, \gamma, \delta_1,
  \delta_2, \ldots, \delta_{r-3}\}$. Let $\D := \{\delta_1, \delta_2,
  \ldots, \delta_{r-3}\}$. We analyze four taxa and the same characters $f_1 :=
  \alpha\alpha\beta\beta$ and $f_2:=\alpha\beta\gamma\beta$ that were
  analyzed in the case $r=3$, but this time under the $N_r$-model with
  $r > 3$. Again we consider the sequence of characters $S:=
  f_1\underbrace{f_2 \ldots f_2}_{n \mbox{\scriptsize \hspace{0.02cm}
      times}}$.

  We only sketch the proof in this case. In particular, we indicate how
  the expressions for the likelihood function may be written regardless
  of the number of states.

  The expressions for $ P(f_i|\T_j)$ for $i=1,2$ and $j = 1,2,3$ can be
  written in a simple manner since the states $\delta_i$ do not occur in
  $S$. For example, let the substitution probabilities on the edges of a
  four-taxa tree $\T$ be $\bar{p} = (p_i, i = 1,2, \ldots, 5)$, where
  $p_i, i = 1,2,3,4$ are the substitution probabilities on the pending
  edges adjacent to taxa 1, 2, 3, 4, respectively, and $p_5$ is the
  substitution probability on the internal edge. Let $v$ and $w$ be the
  internal vertices of $\T$. We write $ P(f_i|\T, \bar{p}) = \sum_g
  P(g|\T,\bar{p})$, where the summation is over all extensions $g$ of
  $f_i$.

  Now observe that if $g$ and $h$ are two extensions of either $f_1$ or $f_2$, then we have $P(g|\T,\bar{p}) =
  P(h|\T,\bar{p})$ if $g(v), h(v) \in \D$ and $g(w) = h(w) = s \notin
  \D$ (or vice versa with the roles of $v$ and $w$ interchanged).

  Therefore:
  \begin{equation*}
    \sum_{g:g(v) \in \D, g(w)=s\notin \D }  P(g|\T,\bar{p}) 
    = (r-3)P(h|\T,\bar{p}),
  \end{equation*}
  where $h$ is an extension of $f$ for which $h(v) = \delta_1$ and $h(w) =
  s$. 

  Similarly:
  \begin{equation*}
    \sum_{g:g(v)=s \notin \D, g(w) \in \D }  P(g|\T,\bar{p}) 
    = (r-3)P(h|\T,\bar{p}),
  \end{equation*}
  where $h$ is an extension of $f$ for which $h(v) = s$ and $h(w) =
  \delta_1$.

  Finally: 
  \begin{equation*}
    \sum_{g:g(v) \in \D, g(w) \in \D }  P(g|\T,\bar{p}) 
    = (r-3)(1-3p_5)p_1p_2p_3p_4,
  \end{equation*}
  
  With these observations, it is possible to write the expressions
  for computing $P(f_i|T_j)$ in a computer algebra system. As in the
  case $r=3$, we analyzed only $P(f_2|\T_4)$ and $P(f_2|\T_5)$, and
  verified that $\max P(f_2|\T_5) \geq \frac{u^2(3-2ru)}{r}$ and $\max
  P(f_2|\T_4) = \frac{u^2}{r}$. Since $(3-2ru) > 1$ for all $u <
  \frac{1}{r}$, there is an $n$ for which $\max P(S|\T_5) > \max
  P(S|\T_4)$. This means that ML will favor $\T_5$ over $\T_4$, even
  though $\T_4$ is the unique MP-tree in this setting.
\end{proof}\par\vspace{0.3cm}

\begin{remark}
  \label{mike}
  It is important to state that in the examples for $r\geq 3$ introduced
  in the proof of Proposition \ref{no-clock}, where the number of taxa
  is bounded (in fact, it is only 4), as $u$ approaches $\frac{1}{r}$,
  we require $n$ to tend to infinity for ML and MP to make different
  choices.  However, this is a necessary property of any such example
  for which the number of taxa is bounded: For any fixed character
  sequence $S$, the continuity of the likelihood function and the
  Tuffley-Steel result (Theorem ~\ref{thm-tuffley-steel}) imply that
  there is a positive real number $\epsilon(S)$ such that if $u >
  \frac{1}{r} - \epsilon(S)$, then ML and MP choose the same sets of
  trees. Therefore, for a bounded number of taxa, since there are only
  finitely many sequences of length at most $k$, we set $\epsilon:=
  \min_S(\epsilon(S))$, where the minimization takes place over all
  character sequences of length at most $k$, and conclude that MP and ML
  would be equivalent (in the sense of the Tuffley-Steel result) for all
  $u > \frac{1}{r} - \epsilon$, for all sequences of length at most
  $k$. Therefore, as $u$ approaches $\frac{1}{r}$, the sequence length
  $n$ of sequences for which MP and ML make conflicting choices has to
  tend to infinity.
\end{remark}

We now complement the above inequivalence results by showing that for
sufficiently small choices of $u$, all ML-trees are also MP-trees. To
prove this result, we first establish lower and upper bounds for the
maximum probability of observing a character given a tree.

\begin{proposition}
\label{ml-bounds-2}
Let $\T$ be a phylogenetic $X$-tree, where $|X|=m$. Let $f$ be a
character on $X$. Let $0 \leq u < \frac{1}{r}$. Then under the
$N_r$-model with all substitution probabilities bounded by $u$, we have
\begin{equation*}
  \left(\frac{1}{r}\right) u^{l_{\T}(f)} \leq 
  \max P(f|\T) \leq r^{m-3}u^{l_{\T}(f)}.
\end{equation*}
\end{proposition}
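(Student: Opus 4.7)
The plan is to use Lemma \ref{corner} to restrict attention to corner points $\bar{p}\in\{0,u\}^{E(\T)}$ and then to use the standard expansion
\[
P(f\mid \T,\bar{p}) \;=\; \frac{1}{r}\sum_{g}\,\prod_{e=\{u,v\}}\!\!\bigl[\,p_e\text{ if }g(u)\neq g(v),\ q_e\text{ otherwise}\bigr],
\]
where the sum runs over all extensions $g$ of $f$ to $V(\T)$, and $q_e=1-(r-1)p_e$. At a corner, the factor on an edge with $p_e=0$ is $0$ (if the edge is mutated by $g$) or $1$ (if it is not), while on an edge with $p_e=u$ it is $u$ or $1-(r-1)u$. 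Both bounds will be obtained by evaluating this expansion at a carefully chosen corner and counting the surviving terms.

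For the lower bound, I would fix a most parsimonious extension $g^*$ of $f$ and set $p_e=u$ on exactly the $l_{\T}(f)$ edges on which $g^*$ induces a substitution, and $p_e=0$ on all other edges. With this choice every extension $g$ that has a substitution on some edge $e$ with $p_e=0$ contributes nothing. Hence only extensions whose substitution edges lie in the set $\{e:g^*\text{ substitutes}\}$ survive, and by the minimality of $g^*$ each surviving extension must have substitutions on precisely that set, contributing $\frac{1}{r}u^{l_{\T}(f)}$. Since $g^*$ itself is one such extension, we obtain $P(f\mid \T,\bar{p})\ge \frac{1}{r}u^{l_{\T}(f)}$.

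For the upper bound, by Lemma \ref{corner} I may assume $\bar{p}\in\{0,u\}^{E(\T)}$. Any extension $g$ with a substitution on an edge with $p_e=0$ contributes $0$. Any other extension $g$ has all substitutions concentrated on the edges with $p_e=u$, say $k$ of them; then $k\ge l_{\T}(f)$ and $g$ contributes
\[
\frac{1}{r}\,u^{k}\bigl(1-(r-1)u\bigr)^{K-k}\ \le\ \frac{1}{r}\,u^{l_{\T}(f)},
\]
because $u<\tfrac{1}{r}\le 1$ forces $u^{k}\le u^{l_{\T}(f)}$ and $(1-(r-1)u)^{K-k}\le 1$. Summing over extensions, the total is at most $\frac{1}{r}u^{l_{\T}(f)}$ times the number of extensions, which is $r^{|V(\T)\setminus X|}$. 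The degree condition (no degree-$2$ vertex, every internal vertex of degree $\ge 3$) together with the handshake lemma gives $|V(\T)\setminus X|\le m-2$, so the total is at most $r^{m-3}u^{l_{\T}(f)}$, proving the upper bound.

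The only substantive step is verifying that, at a $\{0,u\}$-corner, a mutated edge with $p_e=0$ forces the extension to vanish while the surviving nonzero contributions are each bounded by $\frac{1}{r}u^{l_{\T}(f)}$; this is essentially bookkeeping. The main pitfall is the constant $r^{m-3}$: it comes from combining the $\frac{1}{r}$ prefactor with the count $r^{m-2}$ of internal-vertex colorings, and relies on the phylogenetic-tree convention that forbids degree-$2$ vertices. Everything else is direct inspection of the multilinear expansion.
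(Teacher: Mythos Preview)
Your argument is correct and follows essentially the same route as the paper: for the lower bound you evaluate at the corner determined by a most parsimonious extension, and for the upper bound you invoke Lemma~\ref{corner}, bound each extension's contribution by $\frac{1}{r}u^{l_{\T}(f)}$, and multiply by the number of extensions. If anything, you are slightly more careful than the paper, which asserts that the number of internal vertices is \emph{exactly} $m-2$ (true only for binary trees), whereas your use of the degree condition to get $|V(\T)\setminus X|\le m-2$ is the right statement for general phylogenetic $X$-trees.
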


\begin{proof}
  For the lower bound, just as in the Tuffley-Steel approach explained
  above, we take a most parsimonious extension $g$ of $f$ and assign
  substitution probability $u$ to each edge that has a substitution in
  $g$, and 0 to all other edges. Considering the $r$ possible root states
  (for an arbitrarily chosen root), this gives the lower bound for
  $\max_{\bar{p}} P(f|\T,{\bar{p}})$.

  To prove the upper bound, we observe that there are exactly $r^{m-2}$
  extensions of $f$, where $m-2$ is the number of internal vertices,
  each of which may be assigned any of the $r$ states. We will now
  analyze these extensions. Let $g$ be any extension of $f$. For substitution probabilities $p_e\in \{0,u\}$ assigned to the edges of
  the tree, the value of $P(g|\T, \bar{p})$ for an assignment of
  probabilities that maximizes $P(f|\T, \bar{p})$ is either 0 (if one of
  the edges where there is a substitution in $g$ has been assigned a
  substitution probability 0) or is given by
  \begin{equation} 
    P(g|\T,\bar{p}) 
    = \frac{1}{r}u^{k_1}(1-(r-1)u)^{k_2} 
    \leq \frac{1}{r}u^{k_1} \leq \frac{1}{r}u^{l_{\T}(f)}, 
  \end{equation}
  where $k_1 \geq l_{\T}(f)$ is the number of edges where there is a
  substitution in $g$, and $k_2$ is the number of edges which require no
  substitution in $g$ but have been assigned substitution probability
  $u$. The factor $\frac{1}{r}$ is caused by the $r$ different possible
  choices for the root state.
  
  The upper bound now follows by summing the probabilities of all
  extensions.
\end{proof}

Now we will use the above bounds to derive the desired conclusion on
ML-trees.

\begin{theorem}\label{MP_ML_ordering}
  Let $\T_a$ and $\T_b$ be two phylogenetic $X$-trees, where $|X|=m$,
  and let $S:=f_1,f_2,\ldots ,f_n$ be a sequence of characters on
  $X$. Let the substitution probabilities on all edges of $\T_a$ and
  $\T_b$ be bounded by $u \leq r^{(2-m)n}$. Then under the $N_r$-model
  with no common mechanism, we have:
\begin{equation*}
    l_{\T_b}(S)<l_{\T_a}(S) \Rightarrow \max P(S|\T_b) > \max P(S|\T_a).
  \end{equation*}
\end{theorem}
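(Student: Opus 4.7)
The plan is to combine Proposition~\ref{ml-bounds-2} with the no-common-mechanism factorisation $\max P(S|\T) = \prod_{i=1}^n \max P(f_i|\T)$. First, I would apply the lower bound of Proposition~\ref{ml-bounds-2} to each character $f_i$ on $\T_b$ and the upper bound to each character on $\T_a$; multiplying these inequalities across characters yields
\begin{equation*}
  \max P(S|\T_b) \,\geq\, r^{-n}\,u^{l_{\T_b}(S)}
  \quad\text{and}\quad
  \max P(S|\T_a) \,\leq\, r^{(m-3)n}\,u^{l_{\T_a}(S)},
\end{equation*}
so that the $u$-exponent on each side tracks the total parsimony score on the corresponding tree.

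Next I would form the ratio of these two bounds. Writing $d := l_{\T_a}(S) - l_{\T_b}(S) \geq 1$, the ratio collapses to $r^{(2-m)n}\,u^{-d}$, and the target conclusion $\max P(S|\T_b) > \max P(S|\T_a)$ reduces to the single scalar inequality $u^d < r^{(2-m)n}$. This is where the hypothesis $u \leq r^{(2-m)n}$ is tailor-made: since $u < \frac{1}{r} \leq 1$, one has $u^d \leq u \leq r^{(2-m)n}$ for every $d \geq 1$, which gives the inequality in the desired direction.

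The one delicate point is the boundary case $u = r^{(2-m)n}$ together with $d = 1$, in which the naive chain yields only $\geq 1$ rather than $> 1$. My strategy here would be to observe that the upper bound in Proposition~\ref{ml-bounds-2} is not tight: the bound $r^{m-3}\,u^{l_{\T_a}(f_i)}$ is obtained by treating all $r^{m-2}$ extensions of $f_i$ as if each contributed the maximal per-extension value $\frac{1}{r}\,u^{l_{\T_a}(f_i)}$, whereas extensions inducing strictly more than $l_{\T_a}(f_i)$ substitutions contribute strictly less by an extra power of $u$. Hence at least one of the input inequalities is in fact strict, upgrading $\geq$ to $>$ in the conclusion. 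The substantive content of the argument is really just a careful bookkeeping of the exponents, with the stated bound on $u$ chosen precisely to compensate for the $r^{(m-2)n}$ loss incurred when combining the two-sided per-character bounds.
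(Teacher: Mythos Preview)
Your approach is the paper's: apply Proposition~\ref{ml-bounds-2} to each character, multiply under the no-common-mechanism factorisation, and reduce the comparison to $u^d \leq r^{(2-m)n}$. You are in fact more careful than the paper about the boundary case $u=r^{(2-m)n}$, $d=1$; the paper simply writes ``for sufficiently small values of $u$ we have $u^a<cu^b$'' and leaves it at that.

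One small gap in your strictness argument: it is not always the case that some extension of $f_i$ induces strictly more than $l_{\T_a}(f_i)$ substitutions. For example, under $N_3$ on the three-leaf star with $f=\alpha\beta\gamma$, all three extensions are most parsimonious with exactly two substitutions each. The upper bound is nevertheless strict for $m\geq 3$ and $0<u<\tfrac1r$, but the reason is slightly different from the one you give. Pick an edge $e=\{w,v\}$ with $w$ internal and $v$ a leaf (such an edge exists once $m\geq 3$). At the optimising corner, $e$ carries probability either $0$ or $u$. In the first case any extension with $g(w)\neq f(v)$ contributes $0$; in the second case any extension with $g(w)=f(v)$ picks up a factor $(1-(r-1)u)<1$ on $e$. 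Either way at least one of the $r^{m-2}$ terms falls strictly below $\frac{1}{r}u^{l_{\T_a}(f_i)}$, which is what you need.
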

 
\begin{proof}
By Proposition \ref{ml-bounds-2}, we have:
\begin{equation} 
  \label{lower_bound_T_a}
  \max P(S|\T_a) \leq r^{(m-3)n}u^{\sum_i l_{\T_a}(f_i)}
\end{equation} 
\par\vspace{-1cm}
{\begin{center} and
\end{center}}
\par\vspace{-1cm}
\begin{equation} 
  \label{upper_bound_T_b} 
  \max P(S|\T_b) \geq \left(\frac{1}{r}\right)^nu^{\sum_i l_{\T_b}(f_i)}.
\end{equation} 
Note that for any positive integers $a$ and $b$ such that $b < a$ and
any positive constant $c$, for sufficiently small values of $u$, we have
$u^a < cu^b$. Now let $b:=\sum_i l_{\T_b}(f_i)$ and $a:= \sum_i
l_{\T_a}(f_i)$ and $c:=\frac{1}{r^{n+(m-3)n}} = r^{(2-m)n}$. Then
Equations (\ref{lower_bound_T_a}) and (\ref{upper_bound_T_b}) imply that
$\max P(S|\T_b)> \max P(S|\T_a)$.
\end{proof}

The following corollary directly follows from the above theorem.

\begin{corollary} \label{sufficiently_small_u} Let $S$ be sequence of
  $n$ characters on a set of $m$ taxa. Then there is an $\epsilon =
  \epsilon(m,n,r)$ such that under the $N_r$-model with no common
  mechanism and with all substitution probabilities subject to an upper
  bound $u \in [0,\epsilon)$, all ML-trees of $S$ are also MP-trees.
\end{corollary}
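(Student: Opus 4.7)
The plan is to derive the corollary as a direct consequence of Theorem \ref{MP_ML_ordering} via a contrapositive argument. I would set $\epsilon := r^{(2-m)n}$, matching the bound on $u$ in that theorem, so that the hypothesis on substitution probabilities carries over verbatim.

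To make the contradiction explicit, I would take any ML-tree $\T_a$ of $S$ and suppose for contradiction that $\T_a$ is not an MP-tree. Then by the definition of an MP-tree there exists a phylogenetic $X$-tree $\T_b$ with $l_{\T_b}(S) < l_{\T_a}(S)$. Because $u < \epsilon = r^{(2-m)n}$, Theorem \ref{MP_ML_ordering} applies to the pair $(\T_a, \T_b)$ and yields $\max P(S|\T_b) > \max P(S|\T_a)$. This contradicts the hypothesis that $\T_a$ maximizes $\max P(S|\cdot)$ over all phylogenetic $X$-trees, and so forces every ML-tree to be an MP-tree.

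I do not expect any genuine obstacle here: the corollary is essentially a packaging of Theorem \ref{MP_ML_ordering}, and the dependence of $\epsilon$ on $m$, $n$, and $r$ enters only through the exponent in $r^{(2-m)n}$. The one small point worth noting is that Theorem \ref{MP_ML_ordering} delivers a strict inequality between likelihoods, so the contrapositive requires the strict upper bound $u < \epsilon$ rather than $u \leq \epsilon$; this is precisely what the half-open interval $[0,\epsilon)$ in the statement supplies.
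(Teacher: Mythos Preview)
Your argument is correct and matches the paper's approach: the paper simply states that the corollary ``directly follows from the above theorem'' without spelling out the contrapositive, which is exactly what you supply. One minor remark: Theorem~\ref{MP_ML_ordering} as stated actually allows $u \leq r^{(2-m)n}$, so the need for the half-open interval $[0,\epsilon)$ is not forced by the strictness of the theorem's \emph{conclusion} as you suggest, but your choice $\epsilon = r^{(2-m)n}$ with $u < \epsilon$ is certainly safe and the argument goes through unchanged.
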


\subsection{Molecular clock} 

We now prove a statement similar to Proposition \ref{no-clock}, but with
substitution probabilities which conform to a molecular clock. Moreover,
we consider only the three-state symmetric model. Under the $N_3$-model,
we consider placing a bound $p_{max}$ on the probability of each
particular substitution from the root to any leaf. The value
$p_{max}=\frac{1}{3}$ means we place no bound beyond that already in the $N_3$-model, while $p_{max}<\frac{1}{3}$ limits the tree depth.

\begin{proposition} 
  \label{clock}
  Under the $N_3$-model with no common mechanism, with the substitution
  probabilities constrained by a molecular clock, MP and ML are not
  equivalent for any bound $p_{max} \in [0,\frac{1}{3}]$.
\end{proposition}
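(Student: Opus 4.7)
The plan is to construct an explicit counterexample consisting of a short character sequence $S$ on a small taxon set (I expect four taxa to suffice) for which MP and ML, restricted to rooted clock-trees under the $N_3$-model, select strictly different trees. The example will parallel the four-taxon construction in the proof of Proposition~\ref{no-clock}, case $r=3$, but adapted to the clock setting by working in the category of rooted ultrametric trees.

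The central mechanism is that Lemma~\ref{corner} fails under a molecular clock. A clock imposes the nonlinear constraint that $\prod_{e} (1 - 3 p_e)$ is the same along every root-to-leaf path, together with a uniform bound $p_{max}$ on the resulting compound substitution probability from the root to any leaf. The feasible region for $\bar{p}$ is therefore a curved, lower-dimensional polytope rather than an axis-aligned box, so the $\{0,u\}^{|E|}$ corners used in the Tuffley--Steel argument generically violate the clock. Consequently, at the ML optimum many edges take intermediate values and $\max P(f|\T)$ is no longer determined solely by $l_\T(f)$. This is the asymmetry I would exploit.

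Concretely, I would fix two rooted four-taxon topologies, say the balanced tree $\T_a = ((1,2),(3,4))$ and a caterpillar tree $\T_b = (((1,2),3),4)$, and pick an informative character $f$ (for instance $\alpha\beta\beta\alpha$ or $\alpha\alpha\beta\gamma$) together with a tiebreaker character chosen so that MP strictly prefers $\T_b$. Using that the likelihood is still multilinear in $\bar{p}$ for each fixed character (the key property underlying Lemma~\ref{corner}), I would compute $\max P(f|\T_a)$ and $\max P(f|\T_b)$ on the clock polytope -- e.g.\ by Lagrange multipliers or by parametrising each rooted clock-tree by a small number of node-height variables -- and show that the shape asymmetry between $\T_a$ and $\T_b$ relative to the clock forces a strict inequality $\max P(f|\T_a) > \max P(f|\T_b)$. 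Taking the sequence $S$ to be the tiebreaker character followed by $n$ copies of $f$, the likelihood ratio $\max P(S|\T_a)/\max P(S|\T_b)$ is then amplified to exceed $1$ for $n$ sufficiently large, so ML selects $\T_a$ while MP selects $\T_b$.

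The main obstacle will be handling the uniform dependence on $p_{max}$ across the entire range $(0,\tfrac13]$. For $p_{max}$ near $\tfrac13$ the construction should work once the clock-induced asymmetry in $\max P(f|\T)$ is identified; for very small $p_{max}$, Proposition~\ref{ml-bounds-2} shows that likelihoods scale as $u^{l_\T(f)}$, so maintaining a strict ML preference for a tree whose parsimony score equals or exceeds its competitor's requires the clock asymmetry constant to dominate, which I expect to achieve by allowing $n$ to depend on $p_{max}$. The delicate calculation is verifying the constrained optimum of the multilinear likelihood on the clock polytope: since the maximum is no longer at a vertex of a box, I anticipate needing either an explicit parametrisation of the polytope boundary or a short case analysis to locate it. Once that calculation is in hand, the comparison of $\max P(S|\T_a)$ with $\max P(S|\T_b)$, and of $l_{\T_a}(S)$ with $l_{\T_b}(S)$, should close the argument.
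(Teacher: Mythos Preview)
Your overall strategy---one informative character to pin down the MP-tree, then $n$ copies of a second character to swing the likelihood ratio---is exactly what the paper does. But your specific tree choice has a genuine gap: $\T_a=((1,2),(3,4))$ and $\T_b=(((1,2),3),4)$ have the \emph{same} unrooted topology $12|34$ (suppressing the degree-$2$ root of the caterpillar leaves the quartet split unchanged). Since the parsimony score depends only on the unrooted tree, $l_{\T_a}(f)=l_{\T_b}(f)$ for every character $f$, and no ``tiebreaker'' can make MP strictly prefer $\T_b$. The rooted-shape asymmetry you hope to exploit is visible to clock-ML but invisible to MP, so your construction cannot produce a strict MP preference.

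The paper instead compares two rooted trees of the \emph{same} balanced shape but with different leaf labels, $\T_1=((1,2),(3,4))$ and $\T_2=((1,3),(2,4))$, which have distinct unrooted topologies $12|34$ and $13|24$. The informative character $f_1=\alpha\alpha\beta\beta$ then makes $\T_1$ the unique MP-tree. The amplifying character is the non-informative $f_2=\alpha\beta\gamma\beta$: it has parsimony score $2$ on every tree, but under the clock its maximum likelihood differs because on $\T_2$ the two $\beta$-leaves form a cherry whose pendant edges may be set to zero. Parametrising each clock tree by the pendant probabilities and the root-to-leaf probability $p$ (so the internal-edge probabilities are determined), the paper computes explicitly
\[
\max P(f_2|\T_1)=\tfrac{1}{3}p_{max}^2(1-2p_{max}),\qquad \max P(f_2|\T_2)=p_{max}^2(1-2p_{max}).
\]
The ratio is exactly $3$ for every $p_{max}\in(0,\tfrac13]$, so the uniformity in $p_{max}$ that you flag as the main obstacle is automatic; one only needs $n$ large enough (depending on $p_{max}$ through the $f_1$ factor) to force $\max P(S|\T_2)>\max P(S|\T_1)$. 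The constrained optimisation you anticipate being delicate is thus handled by a direct parametrisation and a symmetry reduction, not by Lagrange multipliers on a general polytope.
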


\begin{proof}

  Consider the two rooted four-taxa trees $\T_1$ and $\T_2$ along with
  substitution probabilities $p_i$ and $\tilde{p}_i$, respectively, on
  their edges as shown in Figure \ref{clock-fig}. The trees have the
  same shape but different leaf labels, and possibly different
  probabilities of a substitution from the root to any of its
  leaves. Under a molecular clock, we have $p_1 = p_2$ and $p_3 = p_4$
  in $\T_1$, and $\tilde{p}_1 = \tilde{p}_3$ and $\tilde{p}_2 =
  \tilde{p}_4$ in $\T_2$.  Let $p,\tilde{p} \in [0,p_{max}]$ be the
  probabilities of a substitution from the root $\rho$ to any leaf in
  $\T_1$ and $\T_2$, respectively.

  Then under the $N_3$-model, we write $p$ and $\tilde{p}$ in terms of
  the substitution probabilities on the edges of the trees as follows:
  \begin{eqnarray*}
    p
    &=& (1-2p_5)p_1 + p_5(1-2p_1) + p_5p_1 = p_1 + p_5 - 3p_1p_5 \\[1mm]
    &=& (1-2p_6)p_3 + p_6(1-2p_3) + p_6p_3 = p_3 + p_6 -
    3p_3p_6.
  \end{eqnarray*}
  Thus $p_5 = \frac{p-p_1}{1-3p_1}$ and $p_6 =
  \frac{p-p_3}{1-3p_3}$. Similarly, on $\T_2$, we have $\tilde{p}_5 =
  \frac{\tilde{p}-\tilde{p}_1}{1-3\tilde{p}_1}$ and $\tilde{p}_6 =
  \frac{\tilde{p}-\tilde{p}_2}{1-3\tilde{p}_2}$.

  \begin{figure}[ht]
    \centering
    \includegraphics[width=7cm]{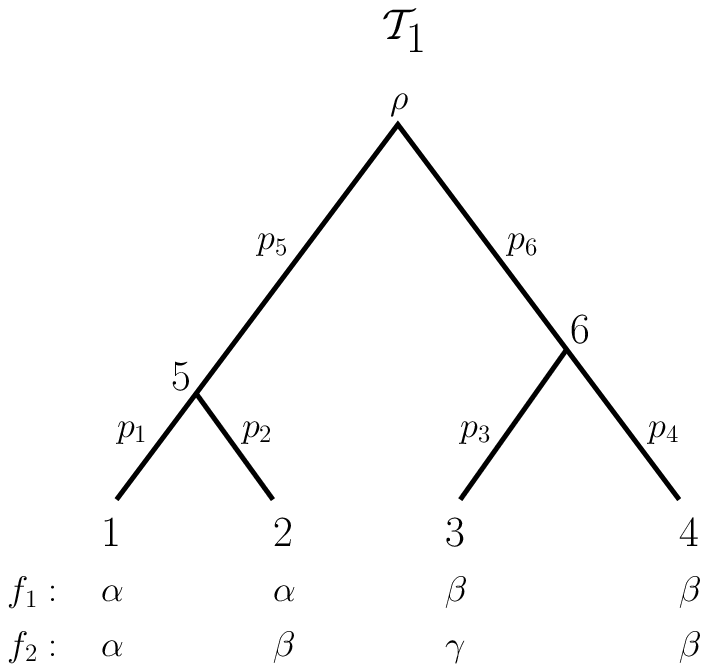}
    \hspace{1cm}
    \includegraphics[width=7cm]{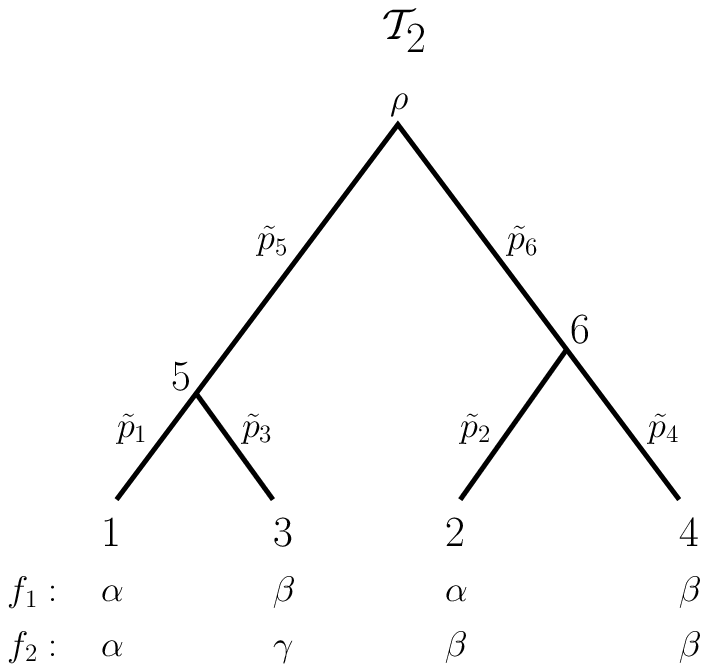}
    \caption[]{Rooted binary trees $\T_1$ and $\T_2$, which conform to a
      molecular clock, and the assignment of characters
      $f_1=\alpha\alpha\beta\beta$ and $f_2=\alpha\beta\gamma\beta$.}
    \label{clock-fig}
  \end{figure}

  As in the proof of Proposition~\ref{no-clock}, we consider the
  $N_3$-model with state space $\C := \{\alpha, \beta,
  \gamma\}$. Consider the characters $f_1 := \alpha\alpha\beta\beta$ and
  $f_2 := \alpha\beta\gamma\beta$, and a sequence of characters $S:=
  f_1\underbrace{f_2 \ldots f_2}_{n \mbox{\scriptsize \hspace{0.02cm}
      times}}$, where $n$ is a positive integer.

  As before, $\T_1$ is the unique MP-tree of $S$. We claim that $\T_1$
  is not an ML-tree if $n$ is sufficiently large. In order to show this,
  we show that $\max P(S|\T_2) > \max P(S|\T_1)$ for a suitable choice
  of $n$.

  We have
  \begin{equation*}
    \frac{ \max P(S|\T_2)}{ \max P(S|\T_1)} = 
    \frac{(\max P(f_1|\T_2))(\max P(f_2|\T_2))^n}
    {(\max P(f_1|\T_1))(\max P(f_2|\T_1))^n}.
  \end{equation*}

  We now demonstrate that $\max P(f_2|\T_2) > \max P(f_2|\T_1)$ for all
  values of $p_{max}$. This allows us to choose a sufficiently large
  value of $n$ so that the ratio above is more than 1.

  First we seek to maximize: $$P(f_2|\T_1, \bar{p}) = \sum_{c\in \C}
  P(f_2|\T_1, \bar{p}, \rho = c)P(\rho = c) = \frac{1}{3}\sum_{c\in
    \C}P(f_2|\T_1, \bar{p}, \rho = c).$$

  Using a computer algebra system, we expand the right-hand side of this equation by summing the probabilities over all possible
  assignments of states to the internal nodes 5 and 6, and substitute
  $p_5 = \frac{p-p_1}{1-3p_1}$ and $p_6 = \frac{p-p_3}{1-3p_3}$ to
  obtain:

  \begin{equation*}
    P(f_2|\T_1, \bar{p}) 
    = \frac{p_1p_3(3p_1p_3 - 2p_1 - 2p_3 + 1 + 2p -3p^2)}{3}.
  \end{equation*}

  Observe that for any fixed values of $p_1 + p_3$ and $p$, the expression above is maximized when $p_1p_3$ is maximized, i.e. when $p_1
  = p_3$. Therefore, we can substitute $p_1$ for $p_3$ and maximize the
  resulting expression given by:
  \begin{equation*}
    \frac{p_1^2(1 - p - p_1)(1 + 3p - 3p_1)}{3 }.
  \end{equation*}

  Under the constraint $p \in [0,p_{max}]$, straightforward arguments
  show that the expression shown above has a maximum at $p_1 = p =
  p_{max}$. Therefore:

  \begin{equation}
    \label{eq-f2t1}
    \max  P(f_2|\T_1) = \frac{p_{max}^2(1-2p_{max})}{3}.
  \end{equation}

  Similar calculations show that:

  \begin{equation}
    \label{eq-f2t2}
    \max P(f_2|\T_2) = p_{max}^2(1-2p_{max}),
  \end{equation}
  where the maximum is obtained by setting $\tilde{p}_2 = \tilde{p}_4 =
  \tilde{p}_5 = 0$ and $\tilde{p}_1 = \tilde{p}_3 = \tilde{p}_6 =
  \tilde{p} = p_{max}$.

  Equations ~(\ref{eq-f2t1}) and ~(\ref{eq-f2t2}) imply that $\max
  P(f_2|\T_2)> \max P(f_2|\T_1)$ for all
  $p_{max}\in(0,\frac{1}{3}]$. Now we can select a sufficiently large
  value of $n$ so that $ \max P(S|\T_1) < \max P(S|\T_2)$, where the
  actual choice of $n$ will depend on the ratio $\frac{\max
    P(f_1|\T_2)}{\max P(f_1|\T_1)}$ and $p_{max}$.
    
  This analysis does not show that $\T_2$ is an ML-tree, but it shows
  that $\T_1$, which is a unique MP-tree, is not an ML-tree. Therefore,
  the two methods are not equivalent under the constraint of a molecular
  clock, even when we assume no common mechanism.
\end{proof}

\par \vspace{0.3cm}

\section{Discussion and Outlook}

Our main objective was to present examples of sequences of characters
for which MP and ML with no common mechanism may choose different sets
of trees under the $N_r$-model when the substitution probabilities are
bounded above by $u < \frac{1}{r}$ or when a molecular clock is
assumed. Our four-taxa examples with $r \geq 3$ character states shows
that even if the upper bound $u$ is arbitrarily close to $\frac{1}{r}$,
we can find sequences of characters which are sufficiently long to cause
MP and ML to make conflicting choices.

The motivation for our four-taxa examples came from the idea of the
so-called `misleading sequences', which are sequences for which the
(parsimoniously) perfect phylogeny (i.e. a tree on which the whole
sequence is completely homoplasy-free) and the tree on which the derived
Hamming distances are additive differ (for details, see
\citep{huson_steel_2004}, \citep{bandelt_fischer_2008}). Even though
this discrepancy refers to perfect phylogenies (as opposed to general
MP-trees), we used a similar idea to construct our four-taxa
examples. In particular, the idea underlying the construction of our
sequences is based on the fact that MP ignores parsimoniously
non-informative characters in any sequence, whereas ML (just as
distance-based methods) does not. We exploited this fact to cause a
discrepancy between MP and ML by taking sufficient non-informative
characters.

It has been known that there are no binary `misleading sequences': if a
set of binary characters is convex on a binary phylogenetic tree, then
the Hamming distances of this sequence are a tree metric on the same
tree (see \citep{semple_steel_2003}, Prop. 7.1.9). But for MP and ML, it
is still unknown if there is a sequence of binary characters for which
these methods make conflicting choices when the substitution
probabilities are bounded above by $u < \frac{1}{r}$. We looked at
binary characters on five-taxa trees and found character sequences for
which some MP-trees are not ML-trees, but we observed that the ML-trees
in our examples were also MP-trees - which means that the equivalence of
MP and ML failed. But we did not find an example of strictly conflicting
choices in the binary case. Also, we did not find examples of sequences
for which MP and ML are not equivalent for values of $u$ which are
arbitrarily close to $\frac{1}{2}$. Thus, it would be interesting to
analyze two-state models further to decide if all ML-trees are MP-trees
and if the equivalence between MP and ML under no common mechanism can
fail for values of $u$ that are arbitrarily close to $\frac{1}{2}$.
  
MP is traditionally assumed to be justified (in the sense of agreement
with ML) whenever substitution probabilities are small (see, for
example, \citep{felsenstein_2004}). Therefore, our result, which shows
that an upper bound on the substitution probabilities can make the
equivalence of MP and ML fail under the $N_r$-model with no common
mechanism, is particularly surprising. On the other hand, we have shown
that for sufficiently small choices of the upper bound, all ML-trees are
at least also MP-trees (but not vice versa). So in summary, although MP
has been proven to agree with ML in the $N_r$-model under the assumption
of no common mechanism (and under no further constraints), our examples
show that this equivalence may fail when the model is changed
slightly. Therefore, we conclude that neither the presence nor the
absence of a common mechanism alone can justify MP in the sense of an
MP-ML equivalence. More research could be done on other models of
nucleotide substitution in order to analyze conditions under which ML
and MP may give conflicting results. This might highlight even more
differences between MP and ML.

\section{Acknowledgements}
We want to thank Mike Steel who suggested the topic of this paper and
made many useful comments (in particular Remark~\ref{mike}). We thank
the Allan Wilson Centre for Molecular Ecology and Evolution, and the
European Union grant HuBi (Hungarian Bioinformatics) at the Alfr\'{e}d
R\'{e}nyi Institute of Mathematics for funding. Bhalchandra Thatte is
currently supported by the EPSRC research grant ``From Population
Genomes to Global Pedigrees'' in Jotun Hein's group at Oxford. Finally,
we wish to thank two anonymous referees for making valuable suggestions
to improve the manuscript.

\bibliography{fischer_thatte_bibfile}
\end{document}